\newcommand{\F}{\ensuremath{\mathbb{F}}\,}
\newcommand{\Fx}{\ensuremath{\mathbb{F}[x]}\,}
\newcommand{\Fxsub}[1]{\ensuremath{\mathbb{F}_{#1}[x]}}
\newcommand{\Fxy}{\ensuremath{\mathbb{F}[x,y]}\,}
\newcommand{\refeq}[1]{(\ref{#1})}
\newtheorem{definition}{Definition}
\newtheorem{theorem}{Theorem}
\newtheorem{proposition}{Proposition}
\newcommand{\MHASSE}[2]{\ensuremath{Q^{[#1,#2]}(x,y)}}
\newcommand{\MHASSEB}[4]{\ensuremath{Q^{[#1,#2]}(#3,#4)}}
\newcommand{\QPOL}[1]{\ensuremath{Q^{(#1)}(x)}}
\newcommand{\RS}[2]{\ensuremath{\mathcal{RS}(#1,#2)}}
\newcommand{\VIRS}[1]{\ensuremath{\mathcal{VIRS}(n,k,s)}}
\newcommand{\SET}[1]{\ensuremath{[#1]}}
\newcommand{\SETzero}[1]{\ensuremath{[#1]_0}}
\newcommand{\mat}[2][\empty]{%%% \empty: Standardwert des optionalen Parameters
  \ifthenelse{\equal{#1}{\empty}}
    {\ensuremath{\mathbf{#2}}}
    {\ensuremath{#2_{#1}}}
}
\newcommand{\vect}[2][\empty]{%%% \empty: default value for the optional parameter
  \ifthenelse{\equal{#1}{\empty}}
    {\ensuremath{\mathbf{#2}}}
    {\ensuremath{#2_{#1}}}
}
\begin{document}

\title{A Link between Guruswami--Sudan's List--Decoding and Decoding of Interleaved Reed--Solomon Codes}

\IEEEoverridecommandlockouts

\author{\authorblockN{Alexander Zeh and Christian Senger}\thanks{This work has been supported by DFG,
Germany, under grants BO~867/17 and BO~867/22-1.}
\authorblockA{Institute of Telecommunications and Applied Information Theory\\
Ulm University, Germany\\
\texttt{\{alexander.zeh,christian.senger\}@uni-ulm.de}}
}
\maketitle

\begin{abstract}
The Welch--Berlekamp approach for Reed--Solomon (RS) codes forms a bridge between classical syndrome--based decoding algorithms and interpolation--based list--decoding procedures for list size $\ell=1$. It returns the univariate error--locator polynomial and the evaluation polynomial of the RS code as a $y$--root.\\ 
In this paper, we show the connection between the Welch--Berlekamp approach for a specific Interleaved Reed--Solomon code scheme and the Guruswami--Sudan principle. It turns out that the decoding of Interleaved RS codes can be formulated as a modified Guruswami--Sudan problem with a specific multiplicity assignment.
We show that our new approach results in the same solution space as the Welch--Berlekamp scheme. Furthermore, we prove some important properties.
\end{abstract}

\begin{IEEEkeywords}
Guruswami--Sudan (GS) interpolation, Reed--Solomon (RS) codes, Interleaved Reed--Solomon (IRS) codes
\end{IEEEkeywords}

\section{Introduction}
The Guruswami--Sudan (GS)~\cite{GuruswamiSudan_ImproveddecodingofReed-Solomonandalgebraic-geometrycodes_1999} approach for Reed--Solomon (RS) codes consists of an interpolation and a factorization step of a degree--restricted bivariate polynomial. The usage of multiplicities in the first stage improved the error--correcting capability of Sudan's original work~\cite{sudan97decoding}. The set of $y$--roots of the bivariate interpolation polynomial gives the candidates of the evaluation polynomials of the corresponding RS codes.
The GS principle coincides with the Welch-Berlekamp (WB) approach~\cite{BerlekampWelch_Patent} when the list size is $\ell=1$. Then, $\tau_0 = \lfloor (n-k)/2 \rfloor$ errors can be uniquely corrected, where $n$ is the length and $k$ the dimension of the RS code.\\
Interleaved Reed--Solomon (IRS) codes are most effective if correlated errors affect all words of the interleaved scheme simultaneously (see~\cite{Krachkovsky_IRS_1998}). Because of this, IRS codes are mainly considered in applications where error bursts occur. Bleichenbacher \textit{et al.}~\cite{Bleichenbacher_DecodingofIRS_2003, Bleichenbacher_DecodingofIRS_2007} formulated an IRS decoding procedure with the WB method.\\ 
Our contribution covers the reformulation of the Bleichenbacher approach in terms of a modified GS interpolation problem for a heterogeneous IRS scheme as it was investigated in~\cite{Schmidt_DecodingReedSolomonCodesBeyondHalf_2006}. The heterogeneous IRS code is built by virtual extension of an RS code. The rate restriction and the decoding radius of this scheme are comparable with the parameters of Sudan's original algorithm (where the multiplicity for each point equals one). Also, the corresponding syndrome formulation (for Sudan done in~\cite{Ruckenstein_PHD2001, Roth_Ruckenstein_2000}) is equivalent. Hence, it seems to be surprising that this scheme can be formulated as a modified GS interpolation problem, where the multiplicities are assigned in a specific manner.\\
The paper is organized as follows. First, we shortly describe the GS principle for RS codes in Section~\ref{sec_GS} and outline important properties that we will use later. The connection to the WB approach is investigated in Section~\ref{sec_listone}. The virtual extension to an IRS code~\cite{Schmidt_DecodingReedSolomonCodesBeyondHalf_2006} is described in Section~\ref{sec_pdec}. Section~\ref{sec_GS-light} links the GS list--decoding procedure with the WB formulation of the previously described IRS scheme. Furthermore, the equivalence of both approaches is proved and an informal description is given. Finally, Section~\ref{sec_future} concludes the paper. An example is given in the appendix.

\section{Definition and Notation} \label{sec_Notation}
Here and later, \SET{n} denotes the set of integers $\{1,\dots,n\}$ and \SETzero{n} denotes the set of integers $\{0,\dots,n\}$. The entries of an $m \times n$ matrix $\mat{S}= \parallel \mat[i,j]{S} \parallel $ are denoted $\mat[i,j]{S} $, where $i \in \SETzero{m-1}$ and $j \in \SETzero{n-1}$. A univariate polynomial of degree $n$ is noted in the form $A(x) = \sum_{i=0}^{n} A_i x^i$. A vector of length $n$ is denoted by $\vect{r} = (\vect[1]{r},\vect[2]{r},\dots,\vect[n]{r})^T$.\\
Let $\alpha_1, \alpha_2, \dots, \alpha_n$ be nonzero distinct elements (code locators) of the finite field $\F=GF(q)$ of size $q$. $\mathcal{L} = \{\alpha_1,\dots, \alpha_n \}$ is the set containing all code locators.
Denote 
\begin{equation*}
 f(\mathcal{L}) = (f(\alpha_i), \dots, f(\alpha_n))
\end{equation*}
for a given polynomial $f(x)$ over $\F$.\\
An RS code \RS{n}{k} over $\F$ with $n<q$ is given by
\begin{equation} \label{eq_defRS}
 \RS{n}{k} = \{ \mathbf{c} = f(\mathcal{L}) : f(x) \in \Fxsub{k} \}, 
\end{equation}
where \Fxsub{k} stands for the set of all univariate polynomials with degree less than $k$ and indeterminate $x$.\\
RS codes are known to be maximum distance separable (MDS), i.e., their minimum Hamming distance is $d=n-k+1$.

\section{The GS principle and the univariate formulation} \label{sec_GS}
\subsection{Guruswami--Sudan Approach for Reed--Solomon Codes}
Let the $n$ points $\lbrace(\alpha_i,r_i)\rbrace _{i=1}^n \, \text{,where} \, \alpha_i,r_i \in \F $ and  $\mathbf{r}=(r_1,\dots, r_n)$ denotes the received word, be interpolated by a bivariate polynomial $Q(x,y)$. The number of errors, that can be corrected, is denoted by $\tau$. The parameter $s$ is the order of multiplicity of the bivariate interpolation polynomial in the GS algorithm. The list size is denoted by $\ell$. The nonzero interpolation polynomial $Q(x,y)$ has to satisfy the following degree conditions:
\begin{equation} \label{eq_GS_cond1}
DC_1 := \begin{bmatrix}
\deg_{0,1}Q(x,y) & \leq  & \ell,\\
\deg_{1,k-1}Q(x,y) & < & s(n-\tau) \\
\end{bmatrix},
\end{equation} 
where $\deg_{u,v} a(x,y) = ud_x + vd_y $ is the $(u,v)$--weighted degree of a bivariate polynomial $a(x,y) = \sum_{i=0}^{d_x} \sum_{j=0}^{d_y} a_{i,j} x^iy^j$.
The interpolation constraints are:
\begin{equation} \label{eq_interpol2}
IC_1 :=  \begin{bmatrix} \MHASSEB{a}{b}{\alpha_i}{r_i} = 0 \quad \forall i \in \SET{n} \, \text{and} \, \forall a+b < s \end{bmatrix},
\end{equation}
where $\MHASSE{a}{b}$ represents the mixed Hasse derivative (see~\cite{Hasse_1936} for definition) of the polynomial $Q(x,y) \in \Fxy$. Analogously, one can say that the GS polynomial must have a multiplicity of $s$ at each point $(\alpha_i, r_i)$.

\subsection{Univariate Formulation of Guruswami--Sudan}
In~\cite{AugotZeh_OnTheRREquations_2008, ZehA_AnBerlekampMassey_2010} the univariate reformulation of the bivariate GS interpolation problem (\textit{key equations}) was derived. Here, we state some basic properties that will be used later on.
\begin{proposition}[Augot-Zeh~\cite{AugotZeh_OnTheRREquations_2008}]
Given $s \geq 1$, let $Q(x,y)=\sum_{t=0}^{\ell} Q^{(t)}(x)y^t $ be the Guruswami-Sudan interpolation polynomial that satisfies~\refeq{eq_GS_cond1} and \refeq{eq_interpol2} and let $R(x)$ be the Lagrange interpolation polynomial, such that $R(\alpha_i)=r_i \; \forall i \in \SET{n}$ holds. Furthermore, let $G(x)=\prod_{j=1}^n(x-\alpha_j)$. Then, $Q(x,y)$ satisfies~\refeq{eq_interpol2}, if and only if there exist $s$ polynomials $B^{(b)}(x) \in \Fx \; \forall b \in \SETzero{s-1}$ with:
\begin{equation} \label{eq_GSKE}
Q^{[b]}(x,R(x)) = B^{(b)}(x) \cdot G(x)^{s-b},
\end{equation}
where $\deg B^{(b)}(x) < \ell(n-k) - s\tau+b$.
\end{proposition}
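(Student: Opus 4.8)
The plan is to pass from the bivariate multiplicity condition $IC_1$ to a univariate divisibility condition by shifting the $y$--variable by the interpolation polynomial $R(x)$. Concretely, I would write $Q^{[b]}(x,y)$ for the $b$--th Hasse derivative of $Q$ in the $y$--direction (so $Q^{[b]} = Q^{[0,b]}$ in the notation of $IC_1$) and use the Hasse--Taylor expansion in $y$ to record
\begin{equation*}
Q(x,\, y+R(x)) = \sum_{b=0}^{\ell} Q^{[b]}(x,R(x))\, y^b = \sum_{b=0}^{\ell} H_b(x)\, y^b,
\end{equation*}
where I abbreviate $H_b(x) := Q^{[b]}(x,R(x))$. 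These $H_b$ are precisely the objects occurring in \refeq{eq_GSKE}, so the whole statement reduces to controlling their divisibility by powers of $G$.

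The key step is that the substitution $y \mapsto y+R(x)$ preserves multiplicities: since $R(\alpha_i)=r_i$, the shifted polynomial $\tilde{Q}(x,y):=Q(x,\,y+R(x))$ has multiplicity $s$ at $(\alpha_i,0)$ if and only if $Q$ has multiplicity $s$ at $(\alpha_i,r_i)$. I would verify this directly from the expansions: since $R(x)-r_i$ vanishes at $\alpha_i$, substituting $y\mapsto y-R(x)$ into a term $(x-\alpha_i)^a(y')^b$ produces only monomials of combined $(x-\alpha_i,\,y-r_i)$--order $\geq a+b$, and the inverse shift does the same, so the lowest--order term is unchanged. Granting this, the multiplicity condition on $\tilde{Q}$ at $(\alpha_i,0)$ is read off its coefficients directly: writing $H_b(x)=\sum_a h_{a,b}(x-\alpha_i)^a$, the requirement $h_{a,b}=0$ for all $a+b<s$ is exactly $(x-\alpha_i)^{s-b}\mid H_b(x)$ for every $b\in\SETzero{s-1}$. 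Because the locators $\alpha_1,\dots,\alpha_n$ are distinct, collecting this over all $i$ gives $G(x)^{s-b}\mid H_b(x)$, i.e. the existence of $B^{(b)}(x)$ with $H_b(x)=B^{(b)}(x)\,G(x)^{s-b}$. Each link in the chain is an equivalence, so reading it backwards yields the converse and establishes the ``if and only if''.

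It then remains to extract the degree bound from the weighted--degree hypothesis $DC_1$. Using $H_b(x)=\sum_{t\geq b}\binom{t}{b}Q^{(t)}(x)\,R(x)^{t-b}$ together with $\deg R\leq n-1$ and $\deg_{1,k-1}Q(x,y)<s(n-\tau)$ (equivalently $\deg Q^{(t)}<s(n-\tau)-t(k-1)$), a short computation bounds each summand by $s(n-\tau)+t(n-k)-bn+b$, which is largest at $t=\ell$; hence $\deg H_b < sn-s\tau+\ell(n-k)+b-bn$. Subtracting $\deg G^{s-b}=(s-b)n$ leaves $\deg B^{(b)} < \ell(n-k)-s\tau+b$, as claimed.

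The hard part will be the multiplicity--preservation lemma of the second paragraph. A naive route that substitutes $y=R(x)$ into $Q^{[b]}$ and counts the order of vanishing of $H_b$ at $\alpha_i$ directly is awkward, because $R(x)-r_i$ need not have a simple zero at $\alpha_i$, so the order contributed by each mixed Hasse coefficient is not simply $a+b-b'$. Performing the shift first and only afterwards reading off coefficients sidesteps this entirely, and the use of Hasse rather than ordinary derivatives is what keeps the argument valid over the finite field $\F$ of possibly small characteristic.
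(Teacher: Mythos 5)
The paper states this proposition without proof, citing Augot--Zeh; so there is no in-paper argument to compare against. Your proof is correct and follows the standard derivation of the univariate key equations from that reference: the Hasse--Taylor shift $y\mapsto y+R(x)$, the observation that this shift preserves multiplicity at $(\alpha_i,r_i)$ because $R(x)-r_i$ vanishes at $\alpha_i$, the coefficient-by-coefficient reading of the multiplicity condition as $(x-\alpha_i)^{s-b}\mid Q^{[b]}(x,R(x))$, and the weighted-degree bookkeeping for $\deg B^{(b)}$ all check out.
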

We remark that $Q^{[b]}(x,y) := Q^{[0,b]}(x,y) $ denotes the $b$--th Hasse derivative of the bivariate polynomial $Q(x,y)$ with respect to the variable $y$.
%Dividing by $G(x)^{s-b}$ gives $\sum_{b=0}^{s-1} (s-b)n = \binom{s+1}{2}n$ linear equations.\\
%Note, it is sufficient to take only the derivatives with respect to the variable $y$ into account. 
%We find all codewords of distance at most $\tau$ from the received word by computing the factors of $Q(x,y)$ of the form $(y-f(x))$ with $\deg f(x) < k$.

\section{Welch--Berlekamp approach as List--1 Decoder} \label{sec_listone}
We recall a simplified version (as in~\cite{Gemmell-Sudan_IFL1992} or~\cite[Ch. 5]{JustesenHoholdt_ACourseinError-CorrectingCodes_2004}) of the WB approach~\cite[Ch. 7.2]{Moon:ECCMMAM2005} \cite{BerlekampWelch_Patent} for decoding RS codes up to half the minimum distance ($\tau_0 = \lfloor (n-k)/2 \rfloor$). It is seen as special case of the list--decoding problem of GS.\\
The interpolation polynomial $Q(x,y)$ of the GS algorithm for $\ell=s=1$ has the following form:
\begin{equation*}
 Q(x,y) = \QPOL{0} + \QPOL{1}y,
\end{equation*}
where $\deg \QPOL{0} < n-\tau $ and $\deg \QPOL{1} < n-\tau-k+1$. Condition~\refeq{eq_interpol2} simplifies to 
$Q(\alpha_i, r_i) = 0 \; \forall i \in \SET{n}$ and gives $n$ linear equations.
The codeword $\mathbf{c}$ coincides with the received word $\mathbf{r}$ in
at least $n-\tau$ positions. Therefore, we have:
\begin{equation*}
 Q(x,f(x)) = \QPOL{0} + f(x) \cdot \QPOL{1} =  0.
\end{equation*}
So $f(x) = - \QPOL{0} / \QPOL{1}$ and we can rewrite the original interpolation polynomial:
\begin{equation*}
 Q(x,y) = \QPOL{1} \cdot \left( y + \frac{\QPOL{0}}{\QPOL{1}} \right) = \QPOL{1} \cdot (y-f(x)).
\end{equation*}
Clearly, \QPOL{1} is the error--locator polynomial (ELP), because it vanishes for $\tau_0$ $\alpha_i$'s. Let the classical ELP $\Lambda(x) = \prod_{j \in \mathcal{J}} (x-\alpha_j)$, where
$\mathcal{J}$ is the set of error locations. Then, we can write:
\begin{equation} \label{eq_WE_ELP}
 Q(x,y) = \Lambda(x) \cdot (y-f(x)).
\end{equation}
In the WB decoding procedure the polynomial $Q(x,y)$ of~\refeq{eq_WE_ELP} is determined by solving $n$ linear homogeneous equations.
The standard syndrome--based decoding procedure, that consists of $\tau_0$ equations for the ELP, can be derived by reducing the WB equation.

%\section{Virtual Extension to an IRS code and $s$ separate WB equations} \label{sec_pdec}
\section{Virtual Extension to an IRS code} \label{sec_pdec}
\subsection{Basic Principle}
We shortly describe the Schmidt--Sidorenko--Bossert scheme~\cite{Schmidt_DecodingReedSolomonCodesBeyondHalf_2006} where an RS code
is virtually extended to an IRS code. This IRS code is denoted by \VIRS{s}, where $n$ and $k$ are the original parameters
of the \RS{n}{k} code. The parameter $s$ denotes the order of interleaving. % ``extended'' codewords of the Interleaved Reed-Solomon code.
% It is known, that interleaved Reed–Solomon (IRS) codes
% [8], [9], are able to correct errors beyond half the minimum
% distance. Hence, our basic idea is to virtually extend a single
% Reed–Solomon code to an IRS code, and use an IRS decoder
% to decode this extended code. For this purpose, we proceed as
% follows:
Let $p(x) = \sum_{j=0}^{n-1} p_j x^j$ be a univariate polynomial in $\Fxsub{n}$. Then, % $p^{<i>}(x) \in \Fxsub{n}$ represents
\begin{equation*}
 p^{<i>}(x) = \sum_{j=0}^{n-1} p_j^i x^j,
\end{equation*}
is the polynomial in $\Fxsub{n}$ where each coefficient is raised to the power $i$. Analogously, $\mathbf{c}^{<i>}$ denotes the vector $(c_1^i,\dots,c_n^i)^T$.
The virtual IRS code can be defined as follows.
\begin{definition}[Virtual Extension to an IRS code~\cite{Schmidt_DecodingReedSolomonCodesBeyondHalf_2006}]
Let \RS{n}{k} be an RS code with the evaluation polynomials $f(x)$ as defined in~\refeq{eq_defRS}. The virtually
extended Interleaved Reed--Solomon code $\VIRS{s}$ of order $s$ is given by
\begin{align*}
\VIRS{s} & =   
\begin{pmatrix}
\mathbf{c}^{<1>} \\
\mathbf{c}^{<2>} \\
\vdots \\
\mathbf{c}^{<s>}
\end{pmatrix} \\ 
& =  \begin{pmatrix}
f(\mathcal{L}) & : f(x) \in \Fxsub{k} \\
f^2(\mathcal{L}) & : (f(x))^2 \in \Fxsub{2(k-1)+1} \\
\vdots & \\
f^s(\mathcal{L}) & : (f(x))^s \in \Fxsub{s(k-1)+1} 
\end{pmatrix}.
\end{align*}
\end{definition}

Clearly, the parameter $s$ must satisfy $s(k-1) +1 \leq n$. The scheme is restricted to low--rate RS codes and allows to decode beyond half the minimum distance.
The virtual extension is illustrated in Figure~\ref{f_hetirs},
\begin{figure}[ht]
\centering
\includegraphics[angle=0, width=\columnwidth]{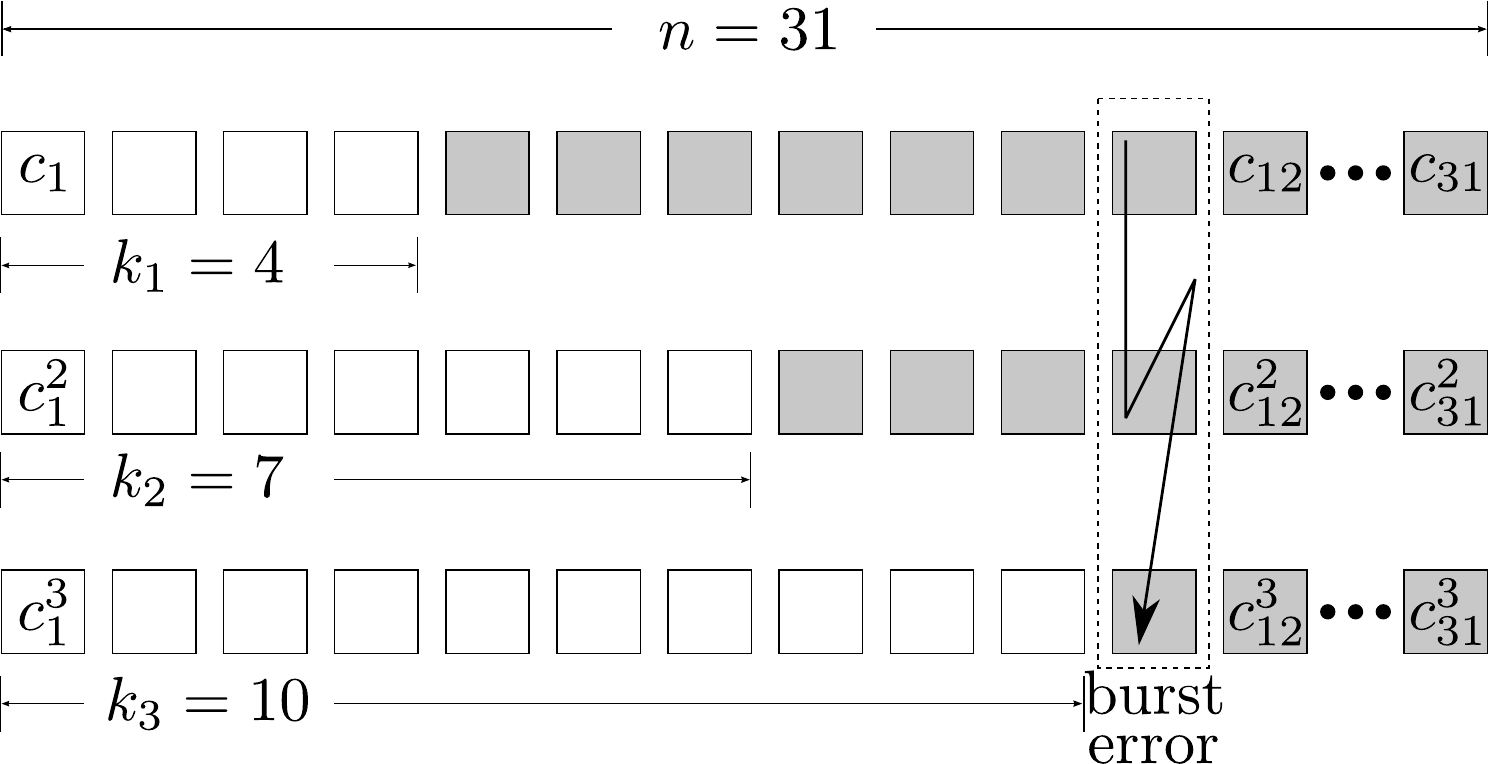}
\caption{Illustration of an $\mathcal{RS}(31,4)$ code that has been virtually extended with interleaving factor $s=3$. The errors in the $\mathcal{RS}(31,4)$ code are extended to burst errors in the $\mathcal{VIRS}(31,4,3)$ code.}
\label{f_hetirs}
\end{figure}
where the information length of the $i$--th codeword is $k^{(i)} = i(k-1)+1$. 
%If $k^{(i)} = k \; \forall i$ we would have an homogeneous IRS code. 
The decoding procedure for the virtual extension of an RS code is as follows; the elements of received word $\vect{r}=\vect{c}+\vect{e}$ are raised to the power $i=2,\dots,s$ ($\vect{r}^{<2>}, \vect{r}^{<3>}, \dots, \vect{r}^{<s>}$) and a heterogeneous IRS code is obtained. Clearly, through the virtual extension, the error is also ``extended'' and every single received word $\vect{r}^{<i>}$ is erroneous at the same positions. Due to the additional equations, the decoding radius is increased to:
\begin{equation} \label{eq_pdectmaxgivenl}
 %\tau_{max}^{(l)} = \left\lfloor \frac{2l n-l(l+1)k+l(l-1)}{2(l+1)} \right\rfloor
 \tau = \left\lfloor \frac{sn-\binom{s+1}{2}(k-1)-s}{s+1} \right\rfloor.
\end{equation}
The radius $\tau$ is greater than $\tau_0 =\lfloor (n-k)/2 \rfloor$ for RS codes with code rate $R < 1/3$.
(For further details (e.g. increased failure probability) of this scheme, see~\cite{Schmidt_DecodingReedSolomonCodesBeyondHalf_2006}).
We remark that the rate--restriction and the increased decoding radius coincide with the original Sudan algorithm (where the multiplicity $s$ equals one for all points $(\alpha_i,r_i)$). Nevertheless, we will show that this scheme is equivalent to a GS interpolation problem with a modified multiplicity assignment and stricter degree constraints.
To start the logical chain, we will describe in the following the corresponding system of equations of the $s$ WB equations for a $\VIRS{s}$ code.

\subsection{Matrix form of the Set of Equations}
Bleichenbacher \textit{et al.}~\cite{Bleichenbacher_DecodingofIRS_2003, Bleichenbacher_DecodingofIRS_2007} described the WB formulation for IRS codes.
We recall this approach for the virtually extended Reed--Solomon code $\VIRS{s}$.\\
Clearly, we have $s$ WB--equations (see \refeq{eq_WE_ELP}) of the form:
\begin{align} \label{eq_WB_VIRS}
  Q^{<b>}(x,y) & = \Lambda(x) \cdot (y^b-f^{b}(x)) \nonumber \\
 & =: \QPOL{s} y^b - \QPOL{b}, 
\end{align}
for all $b \in \SETzero{s-1}$.\\
For every single WB polynomial $Q^{<b>}(\alpha_i,r_i) = 0$ holds ($i \in \SET{n}$). Note, that through the virtual extension, each received word $\vect{r}^{<i>}$ has its errors at the same position and therefore we search one common ELP $\Lambda(x)$.
We represent the $sn$ constraints of system~\refeq{eq_WB_VIRS} in matrix form. Therefore, let the
$n \times (\tau+i(k-1)+1)$ matrix $\mat{M}_{i}$ be:
\begin{equation} \label{eq_elMatrix}
 \mat{M}_{i} = \begin{pmatrix}
 		1 &  	 \alpha_1 & \alpha_1^2  & \cdots & \alpha_1^{N_i-1} \\
 		1 &  	 \alpha_2 & \alpha_2^2  & \cdots & \alpha_2^{N_i-1} \\
 		1 &  	 \alpha_3 & \alpha_3^2  & \cdots & \alpha_3^{N_i-1} \\
 		\vdots & \vdots & \vdots & \ddots & \vdots \\
 		1 &  	 \alpha_n & \alpha_n^2  & \cdots & \alpha_n^{N_i-1}
 \end{pmatrix},
\end{equation}
where $N_i := \tau+i \cdot (k-1)+1$ and let $N$ be defined as:
\begin{equation} \label{eq_defineNsum}
 N = \sum_{i=0}^s N_i = (s+1)(\tau+1) + \binom{s+1}{2} (k-1).
\end{equation}
Furthermore, let the $n \times n$ matrix \mat{R} have the following form:
\begin{equation} \label{eq_elMatrix_rec}
 \mat{R} = \begin{pmatrix}
 		r_1 &  	 \cdots & 0  & 0  \\
 		0 &  	 r_2    & \cdots & 0 \\
 		\vdots & \vdots & \ddots & \vdots \\
		0 &  0	  & \cdots  & r_n \\
 \end{pmatrix}.
\end{equation}
Now, we can write the $s$ polynomial equations from~\refeq{eq_WB_VIRS} in matrix notation.
Let $\vect{Q} = (\vect{Q}^{(0)}, \vect{Q}^{(1)}, \dots, \vect{Q}^{(s)} )^T$, where
$\vect{Q}^{(i)} = (Q^{(i)}_0,Q^{(i)}_1,\dots, Q^{(i)}_{\tau +(s-i)(k-1)})^T$. The homogeneous set
of equations is of the form $\mat{A} \cdot \vect{Q} = \mathbf{0} $, where
the $sn \times N$ matrix $\mat{A}$ is:
\begin{equation} \label{eq_Matrix_pdec}
\mat{A} = \begin{pmatrix}
		\mat{0} & \cdots & \mat{0} &  - \mat{M}_{1} & \mat{R} \cdot \mat{M}_{0} \\
		\mat{0} & \cdots & - \mat{M}_{2} & \mat{0} & \mat{R}^{2} \cdot \mat{M}_{0} \\
 		\vdots & \iddots & \vdots & \vdots & \vdots \\
       -\mat{M}_{s} &  \mat{0} & \cdots & \mat{0} &  \mat{R}^{s} \cdot \mat{M}_{0} \\
\end{pmatrix}.
\end{equation}
The vector $\vect{Q}^{(s)}$ gives the coefficients of the ELP $\Lambda(x)$.

\section{Reformulation as a modified Guruswami--Sudan problem} \label{sec_GS-light}
\subsection{Specific Multiplicity Assignment}
In this section, we formulate the decoding of an \RS{n}{k} code virtually extended to a $\VIRS{s}$ code as a modified GS interpolation problem.
The constraints of the bivariate interpolation polynomial with multiplicities are a modified version of the general GS algorithm introduced
in Section~\ref{sec_GS}. We show the corresponding homogeneous set of equations and prove the equivalence to the one of Bleichenbacher \textit{et al.} (see~\refeq{eq_Matrix_pdec}).\\
Let $\overline{Q}(x,y)$ be a bivariate polynomial of $\Fxy \backslash \{0\} $, where
\begin{equation} \label{eq_GS-light_cond1}
DC_2 := \begin{bmatrix}
\deg_{0,1} \overline{Q}(x,y) \leq s,\\
\deg \overline{Q}^{(t)}(x) \leq \tau + (s-t) \cdot (k-1)  \\
\end{bmatrix}.
\end{equation} 
The modified interpolation constraints for $\overline{Q}(x,y)$ are:
\begin{equation} \label{eq_GS-light}
IC_2 := \begin{bmatrix} \overline{Q}^{[b]}(\alpha_i, r_i) = 0 \quad \forall i \in \SET{n} \; \text{and} \; \forall b \in \SETzero{s-1} \end{bmatrix},
\end{equation}
where the parameter $s$ is such that $s(k-1)+1 \leq n$ holds and $\overline{Q}^{[b]}(x,y)$ denotes the $b$--th Hasse derivative with respect to the variable $y$ of the polynomial $\overline{Q}(x,y)$.
\begin{theorem}
It exists at least one nonzero polynomial $\overline{Q}(x,y)$ which satisfies conditions~\refeq{eq_GS-light}.
\end{theorem}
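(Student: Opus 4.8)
The plan is to read the statement as an existence result for a homogeneous system of linear equations and to settle it by a pure dimension count: I would show that the bivariate ansatz prescribed by $DC_2$ in \refeq{eq_GS-light_cond1} carries at least as many free coefficients as the interpolation conditions \refeq{eq_GS-light} impose constraints, so that the associated homogeneous system cannot reduce to only the trivial solution.

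First I would fix the shape of the candidate. By the first line of \refeq{eq_GS-light_cond1} we may write $\overline{Q}(x,y)=\sum_{t=0}^{s}\overline{Q}^{(t)}(x)\,y^{t}$, and by the second line each $\overline{Q}^{(t)}(x)$ has degree at most $\tau+(s-t)(k-1)$, hence contributes $\tau+(s-t)(k-1)+1$ unknown coefficients. Summing over $t\in\SETzero{s}$ gives the total number of unknowns
\begin{equation*}
\sum_{t=0}^{s}\bigl(\tau+(s-t)(k-1)+1\bigr)=(s+1)(\tau+1)+\binom{s+1}{2}(k-1)=N,
\end{equation*}
which is exactly the quantity $N$ of \refeq{eq_defineNsum}. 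Next I would observe that \refeq{eq_GS-light} is a family of homogeneous \emph{linear} constraints in these unknowns: writing the $b$-th Hasse derivative with respect to $y$ as $\overline{Q}^{[b]}(x,y)=\sum_{t\ge b}\binom{t}{b}\overline{Q}^{(t)}(x)\,y^{t-b}$ and evaluating at $(\alpha_i,r_i)$ shows that each condition $\overline{Q}^{[b]}(\alpha_i,r_i)=0$ is linear in the coefficients of the $\overline{Q}^{(t)}$. Letting $i$ range over $\SET{n}$ and $b$ over $\SETzero{s-1}$ yields exactly $ns$ equations, i.e.\ the rows of the matrix $\mat{A}$ in \refeq{eq_Matrix_pdec}. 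Since a homogeneous linear system with more unknowns than equations always admits a nonzero solution, it suffices to prove $N>ns$.

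The remaining and, in my view, delicate step is precisely the inequality $N>ns$. Substituting the decoding radius \refeq{eq_pdectmaxgivenl} into $N=(s+1)(\tau+1)+\binom{s+1}{2}(k-1)$, the term $\binom{s+1}{2}(k-1)$ cancels against the numerator of \refeq{eq_pdectmaxgivenl}, leaving $N\le ns+1$ with equality exactly when the bracketed quantity in \refeq{eq_pdectmaxgivenl} is an integer. I expect the crux to be this tightness: in the balanced case --- for instance the \RS{31}{4} scheme of Figure~\ref{f_hetirs}, where $ns=93$ and $N=94$ --- the count exceeds the number of constraints by a single coefficient, giving $N=ns+1>ns$ and hence, by the dimension argument, a nonzero $\overline{Q}(x,y)$. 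The careful bookkeeping of the floor, confirming that the choice of $\tau$ in \refeq{eq_pdectmaxgivenl} leaves at least one free coefficient, is the step I would spend the most effort on; should the strict count ever fail to leave slack, I would fall back on the block structure of $\mat{A}$ in \refeq{eq_Matrix_pdec} (the $-\mat{M}_{i}$ blocks together with the last column $\mat{R}^{i}\cdot\mat{M}_{0}$) to exhibit a rank deficiency directly.
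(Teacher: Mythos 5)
Your proposal is essentially the paper's own argument: the paper's entire proof is the observation that \refeq{eq_GS-light} imposes $sn$ homogeneous linear conditions on the $N$ coefficients permitted by $DC_2$, so a nonzero solution exists ``for the decoding radius $\tau$ as in \refeq{eq_pdectmaxgivenl}.'' Your coefficient count $\sum_{t=0}^{s}\bigl(\tau+(s-t)(k-1)+1\bigr)=N$ and the identification of the $sn$ conditions are exactly what is intended (one small slip: the system in question is $\mat{\overline{B}}\cdot\vect{\overline{Q}}=\mat{0}$, not the rows of $\mat{A}$; the two are only shown equivalent later, in Section~\ref{sec_equivalence}). The step you single out as delicate is, however, a real issue that the paper's one-line proof passes over: writing $sn-\binom{s+1}{2}(k-1)-s=(s+1)\tau+r$ with $0\le r\le s$, one finds $N=sn+1-r$, so $N>sn$ holds \emph{only} when $r=0$, i.e.\ when the division in \refeq{eq_pdectmaxgivenl} is exact --- which happens to be the case in both of the paper's examples, where indeed $N=sn+1$. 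For $r\ge 1$ the pure dimension count yields nothing, and your proposed fallback (exhibiting rank deficiency of the $sn\times N$ matrix) is the right instinct but is not carried out; the cleanest repair is to observe that whenever the error weight is at most $\tau$, the polynomial $\Lambda(x)\,(y-f(x))^{s}$ of Theorem~\ref{th_GSlight1} is an explicit nonzero solution of $DC_2$ and $IC_2$. In short, your proof is as complete as the paper's in the exact-division case, and it honestly flags the residual case in which both arguments, as written, are incomplete.
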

\begin{proof}
Condition~\refeq{eq_GS-light} gives $sn$ homogeneous linear equations to the coefficients. The number of possible coefficients
is $N$ (as defined in~\refeq{eq_defineNsum}), therefore we get a nonzero solution for the decoding radius $\tau$ as in Equation~\refeq{eq_pdectmaxgivenl} .
\end{proof}
The bivariate polynomial $\overline{Q}(x,y)$ that fulfills condition~\refeq{eq_GS-light} has multiplicity $s$ for all $n-\tau$ error--free positions and
multiplicity one for all $\tau$ error positions. Let us state this property in the following theorem.
\begin{theorem} \label{th_GSlight1}
The bivariate polynomial $\overline{Q}(x,y)$ under the constraints $DC_2$ and $IC_2$ can be written as:
\begin{equation} \label{eq_explicit_GS_light}
 \overline{Q}(x,y) = \overline{Q}^{(s)}(x) \cdot (y-f(x))^s, 
\end{equation}
where $\overline{Q}^{(s)}(x)$ is the ELP and $f(x)$ is the information polynomial of the RS code (see definition~\refeq{eq_defRS}).
\end{theorem}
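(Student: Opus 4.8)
The plan is to first use $IC_2$ together with the $y$--degree bound in $DC_2$ to determine $\overline{Q}$ on each vertical line $x=\alpha_i$, and then to lift this pointwise information to a global factorization. Writing $\overline{Q}(x,y)=\sum_{t=0}^{s}\overline{Q}^{(t)}(x)\,y^{t}$, the bound $\deg_{0,1}\overline{Q}(x,y)\le s$ means that for a fixed $i$ the polynomial $\overline{Q}(\alpha_i,y)$ has $y$--degree at most $s$, while $\overline{Q}^{[b]}(\alpha_i,r_i)=0$ for all $b\in\SETzero{s-1}$ says that $y=r_i$ is a zero of multiplicity $s$ of it. A nonzero polynomial of degree $\le s$ with a zero of multiplicity $s$ is a scalar multiple of $(y-r_i)^{s}$, the scalar being the leading coefficient, so $\overline{Q}(\alpha_i,y)=\overline{Q}^{(s)}(\alpha_i)(y-r_i)^{s}$ for every $i\in\SET{n}$; Hasse differentiation in $y$ then gives $\overline{Q}^{[b]}(\alpha_i,y)=\binom{s}{b}\overline{Q}^{(s)}(\alpha_i)(y-r_i)^{s-b}$.

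Next I would observe that the claimed identity is equivalent to the divisibility $(y-f(x))^{s}\mid\overline{Q}(x,y)$ in $\Fx[y]$: since $\deg_{y}\overline{Q}\le s$, the cofactor is then forced to be the leading $y$--coefficient $\overline{Q}^{(s)}(x)$, which yields exactly $\overline{Q}(x,y)=\overline{Q}^{(s)}(x)(y-f(x))^{s}$. By the Hasse criterion for a $y$--root of multiplicity $s$, this divisibility holds if and only if the $s$ univariate polynomials $\overline{Q}^{[b]}(x,f(x))$, $b\in\SETzero{s-1}$, vanish identically. Each has degree at most $\tau+(s-b)(k-1)$ by $DC_2$, and at every error--free position $\alpha_i$ (where $r_i=f(\alpha_i)$) condition $IC_2$ gives $\overline{Q}^{[b]}(\alpha_i,f(\alpha_i))=0$; thus the $n-\tau$ error--free locators are already roots. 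For $s=1$ this is the Welch--Berlekamp situation: $\tau+(k-1)<n-\tau$ for $\tau=\tau_0$, the root count beats the degree, and $\overline{Q}^{[0]}(x,f(x))\equiv0$ follows immediately.

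The main obstacle is that for $s\ge2$ the error--free locators alone do not suffice: for the radius $\tau$ of~\refeq{eq_pdectmaxgivenl} the degree bound $\tau+(s-b)(k-1)$ need not be smaller than $n-\tau$, so a naive count cannot force the vanishing, and the $\tau$ error positions must be recruited as additional roots. This requires the key fact that $\overline{Q}^{(s)}(\alpha_j)=0$ at every error position $\alpha_j$, i.e. that the leading coefficient is (a scalar multiple of) the ELP $\Lambda(x)$; once this is available, each $\overline{Q}^{[b]}(x,f(x))$ acquires the remaining $\tau$ roots, the total $n$ exceeds the degree, and all of them vanish. I would stress that $\overline{Q}^{(s)}(\alpha_j)=0$ \emph{cannot} be read off the pointwise relations of the first step, which hold for any value of $\overline{Q}^{(s)}(\alpha_j)$, and so it must come from a global count. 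The natural way to supply it is the correspondence with the Bleichenbacher system $\mat{A}\cdot\vect{Q}=\mathbf{0}$ of Section~\ref{sec_pdec}: the per--point relations $\overline{Q}^{(t)}(\alpha_i)=\binom{s}{t}(-r_i)^{s-t}\overline{Q}^{(s)}(\alpha_i)$ obtained above are carried onto the rows of $\mat{A}$ by the diagonal rescaling $\overline{Q}^{(t)}\leftrightarrow\binom{s}{t}(-1)^{s-t}Q^{(t)}$, whose solution space is one--dimensional and spanned by the error--locator solution whenever the error weight lies within the decoding radius. This identifies $\overline{Q}^{(s)}$ with $\Lambda$ and closes the argument; the point needing care is the invertibility of the rescaling, which one should justify through the Hasse--derivative formulation so as to remain valid in small characteristic.
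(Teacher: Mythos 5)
Your first two steps are sound and arguably cleaner than the paper's argument: the pointwise identity $\overline{Q}(\alpha_i,y)=\overline{Q}^{(s)}(\alpha_i)(y-r_i)^{s}$ does follow from the $y$--degree bound together with the $s$ vanishing Hasse derivatives, and reducing \refeq{eq_explicit_GS_light} to the identical vanishing of the $s$ univariate polynomials $\overline{Q}^{[b]}(x,f(x))$ is exactly the right move. (The paper proceeds differently: a descending induction over $\overline{Q}^{[s-1]},\overline{Q}^{[s-2]},\dots$, identifying $\overline{Q}^{[s-1]}(x,y)$ with a Welch--Berlekamp polynomial and solving for each coefficient $\overline{Q}^{(t)}(x)$ in terms of $\overline{Q}^{(s)}(x)$ and $f(x)$.) You also correctly isolate the crux that the paper glosses over with the word ``clearly'': for $\tau>\tau_0$ the $n-\tau$ error--free roots do not beat the degree $\tau+(s-b)(k-1)$ (already for $b=s-1$ one needs $2\tau<n-k+1$), so one must first know that $\overline{Q}^{(s)}(x)$ vanishes at the $\tau$ error locators.

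Your way of supplying that fact is, however, a genuine gap. The assertion that the Bleichenbacher system $\mat{A}\cdot\vect{Q}=\mathbf{0}$ has a one--dimensional solution space spanned by the error--locator solution ``whenever the error weight lies within the decoding radius'' is not a theorem: for $\tau>\tau_0$ the virtually extended IRS scheme only decodes with high probability, and the paper explicitly refers to its increased failure probability --- which is precisely the event that this kernel has dimension larger than one. Moreover, in the paper the equivalence of the two systems (Section~\ref{sec_equivalence}) is established \emph{after} Theorem~\ref{th_GSlight1} and uses the factorization \refeq{eq_explicit_GS_light} to set up the rescaling between $\vect{Q}$ and $\vect{\overline{Q}}$, so invoking it here is circular as the paper is organized; even if you prove the matrix correspondence independently, you still face the non--invertibility of the scaling by $\binom{s}{t}$ in small characteristic (which you flag but do not resolve), and you still have not obtained one--dimensionality. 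A secondary issue: even granting $\Lambda(x)\mid\overline{Q}^{(s)}(x)$, your count $n>\tau+s(k-1)$ amounts to roughly $\binom{s+1}{2}(k-1)<n+s$, which does not follow from the stated constraint $s(k-1)+1\le n$ alone and must be extracted from the low--rate regime in which $\tau\ge\tau_0$. To be fair, the paper's own proof rests on the same unproven kernel (it declares $\overline{Q}^{[s-1]}(x,y)$ a WB polynomial although the WB degree count requires $\tau\le\tau_0$); the theorem is really either existential (the polynomial $\Lambda(x)(y-f(x))^{s}$ satisfies $DC_2$ and $IC_2$) or conditional on the decoder not failing, and your proof needs that hypothesis made explicit rather than replaced by a uniqueness claim that is false in general.
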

\begin{proof}
Let us consider the ``last'' $(s-1)$--th Hasse derivative of $\overline{Q}(x,y)$ with respect to the variable $y$:
\begin{align*}
 \overline{Q}^{[s-1]}(x,y) & = \binom{s-1}{s-1} \cdot  \overline{Q}^{(s-1)}(x) + \binom{s}{s-1} \cdot \overline{Q}^{(s)}(x)y \nonumber \\
 & = \overline{Q}^{(s-1)}(x) + s \cdot \overline{Q}^{(s)}(x)y \nonumber \\
 & = s \cdot \overline{Q}^{(s)}(x) \cdot \left(y + \frac{\overline{Q}^{(s-1)}(x)}{s \cdot \overline{Q}^{(s)}(x)} \right),
\end{align*}
which is by~\refeq{eq_GS-light} zero for the set $\lbrace(\alpha_i,r_i)\rbrace_{i=1}^n$. Clearly, $\overline{Q}^{[s-1]}(x,y)$ is a WB polynomial for the \RS{n}{k} code
with information polynomial $f(x) = - \overline{Q}^{(s-1)}(x)/ s \cdot \overline{Q}^{(s)}(x)$ (see Section~\ref{sec_listone}).\\
The $(s-2)$--th Hasse derivative of the interpolation polynomial $\overline{Q}^{[s-2]}(x,y)$ can now be rewritten as:
\begin{align} \label{eq_secondHasse}
  \overline{Q}^{[s-2]}(x,y) =  & \overline{Q}^{(s-2)}(x) + \binom{s-1}{s-2} \cdot \overline{Q}^{(s-1)}(x)y + \nonumber \\
  & \binom{s}{s-2} \cdot \overline{Q}^{(s)}(x)y^2 \nonumber \\
  = &  \overline{Q}^{(s-2)}(x) + (s-1) \cdot \overline{Q}^{(s-1)}(x)y + \nonumber \\
  & \frac{1}{2}s(s-1) \cdot \overline{Q}^{(s)}(x)y^2 \nonumber \\
  = &  \frac{1}{2}s(s-1) \cdot \overline{Q}^{(s)}(x) \cdot  (y^2 - f(x)y ) + \nonumber \\
  & \overline{Q}^{(s-2)}(x),
\end{align}
where
\begin{equation*}
 \overline{Q}^{[s-2]}(x,f(x)) = 0
\end{equation*}
from the interpolation constraints holds.
We can now express $\overline{Q}^{(s-2)}(x)$ as:
\begin{align}
 \overline{Q}^{(s-2)}(x) & = - \frac{1}{2}s(s-1) \cdot \overline{Q}^{(s)}(x) \cdot (f(x)^2 -2f(x)^2)  \nonumber \\
& = \frac{1}{2}s(s-1) \cdot f(x)^2 \cdot \overline{Q}^{(s)}(x).
\end{align}
Substituting this into~\refeq{eq_secondHasse}, we obtain for the $(s-2)$--th Hasse derivative of $\overline{Q}(x,y)$:
\begin{equation*}
 \overline{Q}^{[s-2]}(x,y) = \frac{1}{2}s(s-1) \cdot \overline{Q}^{(s)}(x) \cdot (y-f(x))^2,
\end{equation*}
which has multiplicity two at the $n-\tau$ error--free positions and multiplicity one at the $\tau$ erroneous positions. By induction we can state that $(y-f(x))^s \vert \overline{Q}(x,y)$ and $Q^{(s)}(x) \vert \overline{Q}(x,y)$. From $DC_2$ we know, that no other polynomial factor occurs in $\overline{Q}(x,y)$. 
\end{proof}
\subsection{Informal Description}
The degree condition $DC_2$ and the interpolation constraint $IC_2$ for the polynomial $\overline{Q}(x,y)$ are a subset of the general GS list--decoding constraints $DC_1$ and $IC_1$. The $y$--degree of $\overline{Q}(x,y)$ corresponds to the number of codewords of the $\VIRS{s}$ code.
Similar to the univariate formulation of the original GS interpolation problem (see~\refeq{eq_GSKE}) it is sufficient to consider only the Hasse derivatives with respect to variable $y$.\\ 
In the original GS algorithm the $b$--th Hasse derivative of the interpolation polynomial $Q(x,y)$ is divisible by $G(x)^{(s-b)}$, where $G(x)= \prod_{i=1}^n (x-\alpha_i)$ and $n$ denotes the code length. In our case
the $b$--th Hasse derivative of the modified interpolation polynomial $\overline{Q}(x,y)$ is divisible by $\overline{G}(x)^{(s-b)}$. Here, $\overline{G}(x) = \prod_{i \in \SET{n}\setminus \mathcal{J}} (x-\alpha_i)$ and $\SET{n}\setminus \mathcal{J}$ is the set of error--free positions.\\
Furthermore, the ELP $\overline{Q}^{(s)}(x)$, where $\deg \overline{Q}^{(s)}(x)$ can be greater than $\lfloor (n-k)/2 \rfloor$, is a factor of $\overline{Q}(x,y)$. The zeros of $\overline{Q}^{(s)}(x)$ have multiplicity one.\\
The scheme of Section~\ref{sec_pdec} virtually extends the received vector $\vect{r} = (r_1, r_2, \dots, r_n)$ of an \RS{n}{k} code to $s$ received words $\vect{r}^{<i>} = (r_1^i, r_2^i, \dots, r_n^i) \; \forall i \in \SET{s}$ of $s$ different \RS{n}{i(k-1)+1} codes with equal code length $n$.

\subsection{Set of Equations}
Now, we consider the homogeneous set of equations~\refeq{eq_GS-light}. We have $\mat{\overline{B}} \cdot \vect{\overline{Q}} = \mat{0}$, where $\vect{\overline{Q}}$ is the vector notation of the interpolation polynomial $\overline{Q}(x,y)$.
The $sn \times N$ matrix \mat{\overline{B}} can be written as: 
\begin{small}
\begin{equation*} \label{eq_mat_GS-light}
\left( \begin{array}{ccccc}
		\mat{0} & \cdots & \mat{0} &  \binom{s-1}{s-1} \mat{M}_{1} & \binom{s}{s-1} \mat{R} \mat{M}_{0} \\
		\mat{0} & \cdots & \binom{s-2}{s-2} \mat{M}_{2} & \binom{s-1}{s-2} \mat{R} \mat{M}_{1} & \binom{s}{s-2} \mat{R}^{2} \mat{M}_{0} \\
 		\vdots &  \iddots &  \vdots & \vdots & \vdots \\
    	\mat{M}_{s} & \mat{R} \mat{M}_{s-1}  & \cdots  & \mat{R}^{s-1} \mat{M}_{1} & \mat{R}^{s} \mat{M}_{0} \\
\end{array}\right) 
\end{equation*}
\end{small}
where the sub--matrices $\mat{M}_i$ and $\mat{R}$ are defined in~\refeq{eq_elMatrix} and~\refeq{eq_elMatrix_rec}. The binomial coefficients come from the Hasse derivatives of $\overline{Q}(x,y)$:
\begin{equation}
\overline{Q}^{[b]}(x,y) = \sum_{t=b}^s \binom{t}{b} \cdot \overline{Q}^{(t)}(x)y^{t-b}.
\end{equation}
Note, that the first $n$ rows of matrix $\mat{\overline{B}}$ correspond to the $(s-1)$th Hasse derivative of the polynomial $\overline{Q}(x,y)$.
The second $n$ rows represents the $n$ interpolation constraints of the $(s-2)$--th Hasse derivative and so on.
In the last $n$ rows of matrix $\mat{\overline{B}}$ the interpolation polynomial $\overline{Q}(x,y)$ occurs with all terms.

\subsection{Equivalence of Both Sets of Equations} \label{sec_equivalence}
In the following, we show the equivalence between the systems of equations determining the IRS scheme of Section~\ref{sec_pdec} and the one determining the modified GS interpolation polynomial $\overline{Q}(x,y)$. Due to space limitations we will sketch the basic steps of the proof.\\
First, let us consider the relation between vectors $\vect{Q}$ and $\vect{\overline{Q}}$:
\begin{align*}
 \overline{Q}(x,y) = & \overline{Q}^{(s)}(x) (y-f(x))^s \nonumber  \\
  = & \overline{Q}^{(s)}(x) \cdot \left( \sum_{i=0}^s \binom{s}{i} (-1)^i y^{s-i}f(x)^i \right).
\end{align*}
In vector notation, we have:
\begin{multline*}
 (\vect{Q}^{(0)}, \dots, \binom{s}{s-2} \vect{Q}^{(s-2)}, - \binom{s}{s-1}\vect{Q}^{(s-1)}, \vect{Q}^{(s)} )^T = \\ 
 (\vect{\overline{Q}}^{(0)}, \dots, \vect{\overline{Q}}^{(s-2)}, \vect{\overline{Q}}^{(s-1)}, \vect{\overline{Q}}^{(s)} )^T.
\end{multline*}
Let the matrix $\mat{B}$ be such that:
\begin{equation*}
 \mat{\overline{B}} \cdot \vect{\overline{Q}} = \mat{B} \cdot \vect{Q}.
\end{equation*}
Matrix $\mat{B}$ is then (first column not printed):
\begin{small}
\begin{multline*}
\mat{B} =  \\
\left( \begin{array}{cccc}
		\cdots & \mat{0} & -\binom{s}{s-1} \mat{M}_{1} & \binom{s}{s-1} \mat{R} \mat{M}_{0} \\
		\cdots & \binom{s}{s-2} \mat{M}_{2} & -\binom{s}{s-1} \binom{s-1}{s-2} \mat{R} \mat{M}_{1} & \binom{s}{s-2} \mat{R}^{2} \mat{M}_{0} \\
 		\iddots &  \vdots & \vdots & \vdots \\
    	\cdots  & \cdots  & -\binom{s}{s-1} \mat{R}^{s-1} \mat{M}_{1} & \mat{R}^{s} \mat{M}_{0} \\
\end{array}\right).
\end{multline*}
\end{small}
After simplification, we obtain:
\begin{small}
\begin{multline*} 
\mat{B} =  \\
\left( \begin{array}{cccc}
		\cdots & \mat{0} &  - \mat{M}_{1} & \mat{R} \mat{M}_{0} \\
		\cdots & \frac{1}{2}s(s-1) \mat{M}_{2} & -s(s-1) \mat{R} \mat{M}_{1} & \frac{1}{2}s(s-1) \mat{R}^2 \mat{M}_{0} \\
 		\iddots &  \vdots & \vdots & \vdots \\
    	\cdots  & \cdots  & -s \mat{R}^{s-1} \mat{M}_{1} & \mat{R}^{s} \mat{M}_{0} \\
\end{array}\right).
\end{multline*}
\end{small}
The second band of $n$ rows of matrix $\mat{B}$ can be multiplied with $-\mat{R}s(s-1)$--times the first band of $\mat{B}$ and then divided by $-\frac{1}{2}s(s-1)$. We obtain 
the second band of $n$ rows of matrix $\mat{A}$. Repeating this operation, matrix $\mat{B}$ can be transformed into matrix $\mat{A}$~\refeq{eq_Matrix_pdec}.

\section{Conclusion} \label{sec_future}
We investigated a virtual extension of an RS code to an IRS code from an interpolation--based list--decoding approach point of view.\\
The Bleichenbacher scheme was used to form the system of equations for the IRS scheme (based on a virtual extension). Then, the original constraints of the GS list--decoding algorithm were modified and the equivalence of the resulting system of equations with the Bleichenbacher scheme for the IRS code has been shown.

%\IEEEtriggeratref{0}
%\bibliographystyle{IEEEtranS}
%\bibliography{alexzeh,alexzeh_future}

%\appendices
%\section{Example for $s=2$} \label{sec_illustration}
%\vspace{-0.6cm}

\appendix
Let us consider an \RS{16}{4} code over $\F=GF(17)$ with parameter $s=2$ (number of interleaving and multiplicity for the modified GS algorithm). The corresponding increased decoding radius is $\tau=7$ (see~\refeq{eq_pdectmaxgivenl}).\\
The code locators are $\alpha_i = \alpha^{i-1} \ \forall i \in \SET{n}$, where $\alpha$ is $3$. For the information polynomial $f(x) = 1+x+x^2+x^3$ (see~\refeq{eq_defRS}) and an error $\vect{e}$ of weight $\tau = 7$ we get the following vectors: 
\begin{align*} \label{eq_Ex1}
	\vect{c} & = (4, 6, 4, 6, 0, 3, 12, 2, 0, 14, 7, 9, 0, 15, 15, 4)\\
	\vect{e} & = (1, 2, 3, 4, 5, 6, 7, 0, 0, 0, 0, 0, 0, 0, 0, 0)\\
	\vect{r}^{<1>} & = (5, 8, 7, 10, 5, 9, 2, 2, 0, 14, 7, 9, 0, 15, 15, 4) \\
	\vect{r}^{<2>} & = (8, 13, 15, 15, 8, 13, 4, 4, 0, 9, 15, 13, 0, 4, 4, 16).
\end{align*}
The conditions~\refeq{eq_GS-light} on the modified bivariate polynomial $\overline{Q}(x,y)$ give the following solution:
\begin{align*}
\vect{\overline{Q}} = & (5,14,8,6,14,9,5,9,12,12,4,2,3,16,5,9,11,\\
 & 15,13,4,7,2,16,4,16,5,4,2,4,3,12,5,16).
\end{align*}
And the corresponding modified bivariate interpolation polynomial;
\begin{align*}
 \overline{Q}(x,y) = & (x + 2)(x + 4)(x + 7)(x + 8)(x + 12)(x + 14) \cdot \\
 & (x + 16)(y + 16x^3 + 16x^2 + 16x + 16)^2, 
\end{align*}
is factorizable as stated in Theorem~\ref{th_GSlight1}.
\end{document}